\def\ps@pprintTitle{%
   \let\@oddhead\@empty
   \let\@evenhead\@empty
   \def\@oddfoot{\reset@font\hfil\thepage\hfil}
   \let\@evenfoot\@oddfoot
}
\newcommand{\mqi}[1]{{{\color{black}#1}}}
\def\qi#1 {\fbox {\footnote {\ }}\ \footnotetext { From Qi: {\color{red}#1}}}
\newtheorem{theorem}{Theorem}
\newtheorem{lemma}[theorem]{Lemma}
\newtheorem{example}[theorem]{Example}
\newcommand{\bb}{\mathbb}
\newcommand{\tr}[3][1]{\mathrm{Tr}^{#2}_{#1} \left( #3 \right)}
\newcommand{\F}[2][2]{\ifthenelse{\equal{#2}{1}}{\mathbb{F}_{#1}}{\mathbb{F}_{#1^{#2}}}}
\newcommand{\llp}{\left(}
\newcommand{\rrp}{\right)}
\newcommand{\llb}{\left\{}
\newcommand{\rrb}{\right\}}
\newcommand{\wsh}[2]{\mathcal{W}_{#1} \left( #2 \right) }
\newcommand{\efun}[2][-1]{\ifthenelse{\equal{#1}{-1}}{\left( -1 \right)^{#2}}{#1^{#2}}}
\newcommand{\wt}[2][q]{\ifthenelse{\equal{#1}{q}}{\mathrm{wt}_{q}\left( #2 \right)}{\mathrm{wt}_{#1}\left( #2 \right)}}
\newcommand{\f}{f_{a,b}}
\newcommand{\fb}{f_{a,1}}
\newcommand{\ff}[1]{ f_{a,b}\left( #1\right)}
\newcommand{\mcs}{2^m-1}
\newcommand{\fcs}{\frac{2^n-1}{3}}
\begin{document}

\begin{frontmatter}




\title{ An Open Problem on the Bentness of Mesnager's Functions}

\tnotetext[fn1]{
This work was supported by the National Nature Science Foundation of China under Grant No. 11871058, 11931005 and 11971321,  the National Key Research and Development
Program of China under Grant No. 2018YFA0704703, and Zhejiang provincial Natural Science Foundation of China under Grant No. LY21A010013.
}
\author[cmt]{Chunming Tang}
\ead{tangchunmingmath@163.com}

\author[cmt]{Peng Han}
\ead{penghanmitp@163.com}


\author[qwang]{Qi Wang}
\ead{wangqi@sustech.edu.cn}
\author[jzhang]{Jun Zhang}
\ead{junz@cnu.edu.cn}
\author[yqi]{Yanfeng Qi}
\ead{qiyanfeng07@163.com}


\address[cmt]{School of Mathematics and Information, China West Normal University,  Nanchong, Sichuan,  637002,  China}



\address[qwang]{Department of Computer Science and Engineering,  Southern University of Science and Technology,  Shenzhen 518055,  China}

\address[jzhang]{School of Mathematical Sciences,  Capital Normal University,  Beijing 100048,  China}
\address[yqi]{School of Science,  Hangzhou Dianzi University,  Hangzhou,  Zhejiang,  310018,  China}




\begin{abstract}

Let $n=2m$.  In the present paper,  we study \mqi{the} binomial Boolean functions of the form $$\ff{x} = \tr{n}{a x^{\mcs}} +\tr{2}{b x^{\fcs}}, $$
where $m$ is an even positive integer,  $a\in \F{n}^*$ and $b\in \F[4]{1}^*$.
We show that $\f$ is a bent function if the Kloosterman sum 
$$K_{m}\llp a^{2^m+1} \rrp=1+ \sum_{x\in \F{m}^*} \efun{\tr{m}{a^{2^m+1} x+ \frac{1}{x}}}$$
equals $4$,  thus settling an open problem of Mesnager. \mqi{The proof employs tools including computing Walsh coefficients of Boolean functions via multiplicative
characters, divisibility properties of Gauss sums,  and graph theory.}

\end{abstract}

\begin{keyword}
Boolean function \sep bent function \sep Walsh transform  \sep Gauss Sum
\sep directed graph.


\MSC  05B05 \sep 51E10 \sep 94B15

\end{keyword}

\end{frontmatter}


\section{Introduction}

Let $\F[q]{1}$ be the finite field of $q$ elements,  where $q = p^n$ and $p$ is a prime.
Let $f$ be a Boolean function   from
 $\F{n}$ to $\F{1}$ in univariate trace
form.
It's \emph{Walsh coefficient}  at $b\in \F{n}$  can be defined as
\begin{eqnarray}
\wsh{f}{b}=\sum_{x \in \F{n}} \efun{f(x)+\tr{n}{bx}},
\end{eqnarray}
where $\mathrm{Tr}^{n}_1 (\cdot)$ is the \emph{absolute trace function} from $\F{n}$ to $\F{1}$.
The function $f$ is said to be a bent function if its Walsh
coefficients $\wsh{f}{b}$ take values $\pm 2^m$ only for all $b\in \F{n}$.
Bent functions \cite{Dillon74,Rothaus76} are the indicators of Hadamard difference sets in elementary Abelian $2$-groups.
They play \mqi{important} roles in symmetric cryptography,  coding theory,  combinatorial designs,  sequences, \mqi{etc}.
A Boolean (bent) function $f$ over $\F{n}$ is called hyper-bent if $f(x^k)$ is bent for
every $k$ co-prime \mqi{to} $2^n-1$.
Bent functions exist \mqi{for every even integer $n$}.  We shall denote $n=2m$ in the sequel.
If $f$ is a bent function,  then there exists a Boolean function,  that we shall
denote by $\tilde{f}$,  such that,  for any $b$ in $\F{n}$,
\[\wsh{f}{b}=2^m \efun{\tilde{f}(b)}.\]
This function $\tilde{f}$ is bent too\mqi{, and we} call it the dual of $f$.

It is usually difficult to compute the Walsh coefficients of Boolean functions  explicitly;
sometimes,  even computing the absolute values of Walsh coefficients  is
difficult.  However, such difficulties can sometimes be bypassed
by divisibility considerations.
In fact,  the condition on Walsh coefficients of bent functions can be weakened,
without losing the property of being necessary and sufficient \cite{LL08}:
\begin{lemma}\label{lem:Bent-Mod}
Let $n=2m$.  A Boolean function $f$ over $\F{n}$ is bent if and only if for each $b\in \F{n}^*$,  $\wsh{f}{b} \equiv 2^{m} \mod{2^{m+1}}$.  In that case,
the dual $\tilde{f}$ of $f$ can be determined by approximations modulo
\[ \wsh{f}{b} \equiv  (-1)^{\tilde{f}(b)}2^{m}  \pmod{2^{m+2}}.\]
\end{lemma}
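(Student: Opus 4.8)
The forward implication is immediate: if $f$ is bent then $\wsh{f}{b}=\pm 2^m$ for every $b$, and since $-2^m=2^m-2^{m+1}$, both values are congruent to $2^m$ modulo $2^{m+1}$; in particular the stated congruence holds for each $b\in\F{n}^*$. The substance lies in the converse, and the plan is to extract bentness from the congruence hypothesis by feeding it into Parseval's identity
\[ \sum_{b\in\F{n}} \wsh{f}{b}^2 = 2^{2n} \]
together with a parity count.

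Concretely, I would first record that the hypothesis lets us write $\wsh{f}{b}=2^m\lambda_b$ with $\lambda_b$ an \emph{odd} integer for each $b\in\F{n}^*$. To deal with $b=0$, Fourier inversion at the origin gives $\sum_{b\in\F{n}}\wsh{f}{b}=2^n\efun{f(0)}$, and since each nonzero term as well as $2^n$ is divisible by $2^m$, so is $\wsh{f}{0}$; hence $\lambda_0$ is an integer too. Dividing Parseval by $2^{2m}$ then yields $\sum_{b\in\F{n}}\lambda_b^2=2^{2m}$. The decisive step is an extremal argument: the $2^{2m}-1$ nonzero indices each contribute an odd square $\lambda_b^2\ge 1$, so $\lambda_0^2=2^{2m}-\sum_{b\neq 0}\lambda_b^2\le 1$; moreover a sum of an odd number of odd integers is odd, which forces $\sum_{b\neq 0}\lambda_b^2$ to be odd and hence $\lambda_0^2$ to be odd, so $\lambda_0^2=1$. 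Consequently $\sum_{b\neq 0}\lambda_b^2=2^{2m}-1$ is a sum of exactly $2^{2m}-1$ positive odd squares, which forces every $\lambda_b^2=1$. Therefore $\wsh{f}{b}=\pm 2^m$ for all $b$ and $f$ is bent. I expect the only delicate point to be making this bookkeeping airtight — in particular justifying that the equality case of the counting bound pins down the all-ones configuration — rather than any analytic estimate.

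Finally, for the dual, bentness already gives $\wsh{f}{b}=\efun{\tilde{f}(b)}2^m$ exactly, so the displayed congruence modulo $2^{m+2}$ holds trivially. The point worth making is that this weaker, approximate information still \emph{determines} $\tilde{f}$: the two possible values $2^m$ and $-2^m=3\cdot 2^m-2^{m+2}$ lie in distinct residue classes modulo $2^{m+2}$, so reading $\wsh{f}{b}$ only modulo $2^{m+2}$ already fixes the sign $\efun{\tilde{f}(b)}$ and hence the bit $\tilde{f}(b)$.
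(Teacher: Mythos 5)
Your proof is correct. The paper states this lemma without proof, citing Langevin--Leander, and your Parseval-plus-parity argument (writing $\wsh{f}{b}=2^m\lambda_b$ with $\lambda_b$ odd for $b\neq 0$, deducing $\lambda_0^2=1$ from the counting bound and the oddness of a sum of $2^{2m}-1$ odd squares, then forcing all $\lambda_b^2=1$) is precisely the standard argument behind the cited result; the observation that $2^m$ and $-2^m$ are distinct modulo $2^{m+2}$ correctly handles the statement about the dual.
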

For more details about bent functions and their applications
 in cryptography and coding theory,  see  \cite{CarletBook} and \cite{SihemBook}.

Finding bent functions in univariate trace
form is in general difficult and \mqi{of} theoretical interest,  since it gives more
insight on bent functions.  Moreover,  the output to such functions is often faster
to compute thanks to their particular form.

In 2009,  Mesnager \cite{Mes09} has exhibited an infinite class of binomial Boolean functions defined
on $\F{n}$ whose expression is the sum of a Dillon monomial function and a trace
function of the from $\tr{2}{x^{\frac{2^n-1}{3}}}$.   More precisely, \mqi{they are of the form}
\begin{eqnarray}\label{eqn:sihem-functions}
\ff{x} = \tr{n}{a x^{\mcs}} +\tr{2}{b x^{\fcs}},
\end{eqnarray}
where $n=2m$,  $a \in \F{n}^*$ and $b\in \F[4]{1}^*$.  When $m$ is odd,  Mesnager \cite{Mes09,MesDCC} has shown that such functions are bent if and only if the Kloosterman sum
$K_{m}\llp a^{2^m+1} \rrp=1+ \sum_{x\in \F{m}^*} \efun{\tr{m}{a^{2^m+1} x+ \frac{1}{x}}}$ associated with $a^{2^m+1}$ is equal to $4$.
Not only does such a criterion \mqi{give} a concise and elegant characterization for bentness,
but using the connection between Kloosterman sums and elliptic curves it also
allows  fast generation of bent functions.
Unfortunately,  the proof of the aforementioned characterization does not extend to
the case where $m$ is even.
When $m$ is even,  the situation seems to be more complicated
than \mqi{that} in the odd case.
In particular, we are not able to say if a function $\f$  is or not in the Partial Spread class, {\color{black} which is a key part in Mesnager's proof for the odd case}.
Nevertheless,  it has been
shown that $K_{m}\llp a^{2^m+1} \rrp=4$ is still a necessary condition for $\f$ to be bent~\cite{Mes11},  but it is an open problem
to tell whether this condition is sufficient for all even $m$ or not (see  \cite[Open Problem 3]{Carlet14} and \cite{Carlet-Sihem}).  Further experimental evidence gathered by Flori,  Mesnager and
Cohen \cite{FMC} supported the conjecture that it should also be a sufficient condition: for $m$
up to $16$,  $\f$ is bent if and only if $K_{m}\llp a^{2^m+1} \rrp=4$.
In the case when \mqi{$m \equiv 2 \bmod{4}$}, Flori \cite{F16} presented a link between the bentness of $\f$ and a
conjecture about exponential sums involving Kloosterman sums.

We will show that the binomial function $\f$ is indeed bent if $K_{m}\llp a^{2^m+1} \rrp=4$,  whose
proof is the \mqi{objective} of the present paper.  We state our main
result  as \mqi{the following.}
\begin{theorem}\label{thm:Sihem}
Let $m$ be an even positive integer,  and let $n=2m$.  Let $a\in \F{n}^*$ and $b\in \F[4]{1}^*$.  If the Kloosterman sum $K_{m}\llp a^{2^m+1} \rrp=4$, then the binomial function $\f: \F{n} \rightarrow  \F{1}$ defined in Equation~(\ref{eqn:sihem-functions})
is a  bent function.
\end{theorem}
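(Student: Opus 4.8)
The proof strategy is to verify the congruence criterion of Lemma~\ref{lem:Bent-Mod}: it suffices to prove that $\wsh{\f}{w} \equiv 2^{m} \pmod{2^{m+1}}$ for every $w \in \F{n}^{*}$, so that the Walsh coefficients are never needed on the nose. The first step is to exploit the factorization $\F{n}^{*} \cong U \times \F{m}^{*}$, where $U = \{u \in \F{n}^{*} : u^{2^{m}+1} = 1\}$ is the cyclic group of order $2^{m}+1$. Writing $x = uz$ with $u \in U$ and $z \in \F{m}^{*}$, one has $x^{\mcs} = u^{-2}$, and since $m$ is even we have $3 \mid 2^{m}-1$, whence $u^{\fcs} = 1$ for all $u \in U$. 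Consequently $\ff{uz} = \tr{n}{a u^{-2}} + \tr{2}{b z^{\fcs}}$ splits into a Dillon part depending only on $u$ and a cubic-residue part depending only on the cubic class of $z$. Using $\tr{n}{wuz} = \tr{m}{z \cdot \tr[m]{n}{wu}}$, the coefficient $\wsh{\f}{w}$ becomes a sum over $U$, weighted by $\efun{\tr{n}{au^{-2}}}$, of inner sums over $\F{m}^{*}$ in which an additive character twisted by the cubic class of $z$ appears.

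The second step is to convert both the outer sum over $U$ and the inner sums over $\F{m}^{*}$ into Gauss sums, by writing the additive character value of each monomial as a Gauss-sum-weighted combination of multiplicative characters and then applying orthogonality to collapse the summation variable. This expresses $\wsh{\f}{w}$ as a weighted combination of products of Gauss sums, indexed by characters whose orders are constrained by the exponents $\mcs$ and $\fcs$; the cubic trace term forces the relevant multiplicative characters into the cubic part, producing cubic Gauss sums over $\F{m}$, while the outer sum over $U$ against $\efun{\tr{n}{au^{-2}}}$ is exactly the mechanism through which the Kloosterman sum $K_{m}\llp a^{2^{m}+1} \rrp$ enters, its argument being the norm $a^{2^{m}+1}$ of $a$ down to $\F{m}$.

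The decisive step is to read this Gauss-sum expression modulo $2^{m+1}$. Here I would invoke Stickelberger's theorem, which controls the $2$-adic valuation of a Gauss sum $G(\chi)$ through the $2$-weight $\wt[2]{\cdot}$ of the index of $\chi$: only characters of small $2$-weight survive below the threshold $2^{m+1}$, the principal contribution reproducing the desired $\pm 2^{m}$. The crux, and the step I expect to be the main obstacle, is to show that, under the hypothesis $K_{m}\llp a^{2^{m}+1} \rrp = 4$, the terms of $2$-adic valuation exactly $m$ combine to an odd multiple of $2^{m}$, so that $\wsh{\f}{w} \equiv 2^{m} \pmod{2^{m+1}}$ as required. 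I plan to encode the combinatorics of the surviving character indices — namely which cyclotomic-coset representatives of prescribed $2$-weight are paired by the constraints coming from $\mcs$ and the cubic condition — as a counting problem on a suitable directed graph; the borderline contributions should cancel in pairs along the edges of this graph except for configurations whose parity the Kloosterman hypothesis pins down. It is exactly this delicate cancellation, unavailable through the Partial Spread argument that settles the odd case, that makes the even case hard.
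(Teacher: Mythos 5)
Your high-level framework --- reducing to the congruence test of Lemma~\ref{lem:Bent-Mod}, expanding the Walsh coefficient into products of Gauss sums, and controlling $2$-adic valuations via Stickelberger together with a graph-encoded combinatorial step --- is the paper's framework, and your opening decomposition $\F{n}^*\cong U\times \F{m}^*$ is a legitimate alternative entry point (the paper instead uses the indicator identity that $\tr{2}{x^{\fcs}}$ vanishes exactly on cubes, which turns $\wsh{\fb}{c}$ into $1-\sum_{x}\efun{\tr{n}{ax^{\mcs}+cx}}+\frac{2}{3}\sum_{x}\efun{\tr{n}{ax^{3(\mcs)}+cx^{3}}}$). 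But the decisive step is missing, and the mechanism you conjecture for it is not the one that works. In the paper the Kloosterman hypothesis does \emph{not} enter through a parity condition on ``borderline'' terms of $2$-adic valuation exactly $m$ that ``cancel in pairs'': rather, the two extra families of Gauss-sum products created by the cubic term (those indexed by the shifts $\fcs$ and $2\cdot\fcs$) are shown to vanish identically modulo $2^{m+1}$, because of the carried factor $2$ and the weight inequality $\wt{i}+\wt{\fcs-\llp\mcs\rrp i}\ge m$; that inequality is the entire content of the combinatorial/graph step (Theorem~\ref{thm:string}, proved by a modular add-and-carry argument on a $72$-vertex digraph with no negative-weight walk). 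What survives modulo $2^{m+1}$ is exactly the Dillon sum $\sum_{x}\efun{\tr{n}{ax^{\mcs}+cx}}$, and the Kloosterman sum enters there \emph{linearly and exactly}, via Davenport--Hasse (which gives $G((2^m+1)i)=-\overline{G}(i)^2$) and the Gauss-sum expansion of $K_m$ (Lemmas~\ref{lem:kloosterman-gauss} and~\ref{lem:dillon-conguence}), yielding $\wsh{\fb}{c}\equiv\frac{4-K_m(a^{2^m+1})}{3}+2^m\pmod{2^{m+1}}$.

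So the gap is concrete: you have not formulated, let alone proved, the weight inequality that kills the cubic-twisted Gauss-sum terms, and the cancellation pattern you describe (``borderline contributions cancel in pairs except for configurations whose parity the Kloosterman hypothesis pins down'') is not what happens --- under the actual argument those terms contribute nothing modulo $2^{m+1}$ regardless of the value of $K_m$. You also have not supplied the Davenport--Hasse step that converts the surviving $\F{n}$-Gauss sums into the Kloosterman sum over $\F{m}$; without it the hypothesis $K_m(a^{2^m+1})=4$ never connects to your expression. Moreover, your chosen decomposition $x=uz$ leads to an outer sum over $U$ of $\efun{\tr{n}{au^{-2}}}$ weighted by genuinely $u$-dependent inner sums (since for even $m$ the cubic term lives on $\F{m}^*$, not on $U$), which is precisely the obstruction that blocks the Partial Spread argument here; you acknowledge this but still build on that decomposition without a replacement. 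A smaller omission: the theorem allows any $b\in\F[4]{1}^*$, and the reduction to $b=1$ (the paper cites the fact that $\f$ is bent if and only if $\fb$ is) needs to be stated. As written, the proposal is a plausible research plan rather than a proof.
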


The \mqi{remainder of the} paper is organized as follows.  In Section \ref{sec:Gauss-sums},  we fix our main notation and recall the
necessary background on character sums over finite fields.  Next, in Section \ref{sec:main},  we give the proofs that
the value $4$ of \mqi{the} Kloosterman sum \mqi{indeed leads to} bent functions.
Finally,  in Section \ref{sec:weight},  we give the detailed  discussions of binary weight inequality  needed in the proof of our main theorem.


\section{Gauss sums and Stickelberger's Theorem}\label{sec:Gauss-sums}
We collect some auxiliary results on Gauss sums as a preparation for computing \mqi{the} Walsh coefficients of Boolean functions.

Let $p$ be a prime number and $q=p^n.$ Let $\bb{Z}_p$ be the ring of integers in the field $\bb{Q}_p$ of $p$-adic rational numbers.
Let $\bb{Z}_q$ be the ring of integers in the unique unramified extension of $\bb{Q}_p$ with \mqi{the} residue field $\F[q]{1}$.
Let $\omega_q$ be the Teichm\"uller character of the multiplicative group $\F[q]{1}^*$ consisting of all nonzero elements of $\F[q]{1}$.
For $x \in \F[q]{1}^*$,  the value $\omega_q (x)$ is just the $(q-1)$-th root of unity in $\bb{Z}_q$ such that $\omega_q (x)$
modulo $p$ reduces to $x$.  For any integer $k$,  define the  Gauss sum over $\F[q]{1}$ by
\begin{eqnarray}
  G(k)=\sum_{x \in \F[q]{1}^*}  \mqi{\omega_q^{-k}(x)} \efun[\zeta_p]{\tr{n}{x}},
\end{eqnarray}
where $\zeta_p$ is a fixed primitive $p$-th root of unity in an extension of $\bb{Q}_p$ and $\tr{n}{x}=x+x^p+\cdots+x^{p^{n-1}}$ is the trace map from $\F[q]{1}$ to $\F[p]{1}$.

Gauss sums are instrumental in the transition from the additive to the multiplicative structure of a finite field.
We have the following well-known \mqi{relation} between \mqi{the} additive characters and \mqi{the} multiplicative characters.
\begin{lemma}\label{lem:add-mult}
For all $x \in \F[q]{1}^*$,  the Gauss sums satisfy the following interpolation relation
\begin{eqnarray*}
  \efun[\zeta_p]{\tr{n}{x}}= \frac{1}{q-1} \sum_{k=0}^{q-2} G(k) \mqi{\omega_q^k(x)}.
\end{eqnarray*}
\end{lemma}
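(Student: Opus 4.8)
The plan is to establish the interpolation identity by Fourier inversion on the multiplicative group $\F[q]{1}^*$, the substitution of the definition of $G(k)$ being the only substantive move. The structural fact I would record first is that the Teichm\"uller character $\omega_q$ has exact order $q-1$, so $\omega_q \colon \F[q]{1}^* \to \mu_{q-1}$ is a group isomorphism onto the group $\mu_{q-1}$ of $(q-1)$-th roots of unity in $\bb{Z}_q$. In particular $\omega_q$ is injective, and the powers $\omega_q^0, \omega_q^1, \dots, \omega_q^{q-2}$ run over all multiplicative characters of $\F[q]{1}^*$.

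From this I would derive the orthogonality relation that drives the whole computation: for every $z \in \F[q]{1}^*$,
\[
\sum_{k=0}^{q-2} \omega_q^k(z) =
\begin{cases}
q-1, & z=1,\\
0, & z \neq 1.
\end{cases}
\]
Indeed, if $z=1$ every summand equals $1$; if $z \neq 1$ then $\omega_q(z)$ is a $(q-1)$-th root of unity different from $1$ by injectivity, so the geometric sum $\sum_{k=0}^{q-2} \omega_q(z)^k = \bigl( \omega_q(z)^{q-1} - 1 \bigr)/\bigl( \omega_q(z) - 1 \bigr)$ vanishes because $\omega_q(z)^{q-1}=1$.

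With this in hand, I would substitute the definition $G(k) = \sum_{y \in \F[q]{1}^*} \omega_q^{-k}(y)\, \efun[\zeta_p]{\tr{n}{y}}$ into the right-hand side, interchange the two finite sums, and collect the characters evaluated at $x/y$:
\begin{eqnarray*}
\frac{1}{q-1} \sum_{k=0}^{q-2} G(k)\, \omega_q^k(x)
&=& \frac{1}{q-1} \sum_{y \in \F[q]{1}^*} \efun[\zeta_p]{\tr{n}{y}} \sum_{k=0}^{q-2} \omega_q^k\!\left( \frac{x}{y} \right).
\end{eqnarray*}
By the orthogonality relation the inner sum equals $q-1$ exactly when $y=x$ and vanishes otherwise, so only the term $y=x$ survives and the prefactor $1/(q-1)$ cancels, leaving $\efun[\zeta_p]{\tr{n}{x}}$, as claimed.

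There is no genuine obstacle here; it is a clean orthogonality computation, valid because both sums are finite so the interchange is unconditional. The one point to justify carefully is that $\omega_q$ has order exactly $q-1$ (equivalently, that it is a bijection onto $\mu_{q-1}$), since this is precisely what guarantees both that $\{\omega_q^k\}_{k=0}^{q-2}$ is the full character group and that $\omega_q(z) \neq 1$ for $z \neq 1$. This follows directly from the defining property that $\omega_q(x) \bmod p$ reduces to $x$, which forces distinct nonzero field elements to have distinct Teichm\"uller lifts and hence makes $\omega_q$ injective on the $(q-1)$-element group $\F[q]{1}^*$.
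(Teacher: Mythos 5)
Your proof is correct. Note that the paper itself gives no proof of this lemma at all: it is stated as a ``well-known'' relation between additive and multiplicative characters, so your argument fills in exactly the standard justification the authors are implicitly invoking, namely Fourier inversion on the group $\mathbb{F}_q^*$ using orthogonality of the characters $\omega_q^0,\dots,\omega_q^{q-2}$. The one point genuinely requiring care --- that $\omega_q$ has exact order $q-1$, so that its powers exhaust the character group and $\omega_q(z)\neq 1$ for $z\neq 1$ --- is correctly identified and correctly settled by your observation that $\omega_q(z)$ reduces to $z$ modulo the maximal ideal, which forces injectivity of $\omega_q$ on $\mathbb{F}_q^*$.
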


Let $k$ be any integer not divisible by $q-1$.   Then there are unique integers $k_0,  k_1,  \dots,  k_{n-1}$ with $0 \le k_i \le p-1$ for all
$i$,  $0 \le i \le n-1$ such that
\[ k \equiv k_0+ k_1 p+ \dots +k_{n-1} p^{n-1} \pmod{q-1}.\]
We define the ($p$-ary) weight  of $k$ modulo $q-1$,  denoted by $\wt{k}$,  as
$\wt{k}=k_0+k_1 +\dots +k_{n-1}$.  For integers $k$ divisible by $q-1$,  we define $\wt{k}=0$.

To get the first nonzero digit in the $\pi$-adic expansion of the Gauss sum,
we can use the Stickelberger theorem (see \cite{GK79}  and  \cite{Lang}).
\begin{theorem}[Stickelberger]\label{thm:stick}
For integer $0 \le  k \le  q - 2$,  write $k =k_0 + k_1p + \dots  + k_{n-1}p^{n-1}$
in $p$-adic expansion,  where $0 \le  k_i \le p-1$.  Then,
\begin{eqnarray*}
G(k) \equiv \frac{- \pi^{ \wt{k}}}{k_0 ! k_1 ! \dots k_{n-1}!}  \pmod{\pi^{\wt{k}+p-1}},
\end{eqnarray*}
where $\pi$ is the unique element in $\bb{Z}_p[\zeta_p]$ satisfying
$\pi^{p-1} =-p $ and $\pi \equiv \zeta_p-1 \pmod{(\zeta_p-1)^2}$.
In particular,  if $p=2$
the
following congruence holds:

\begin{eqnarray*}
G(k) \equiv 2^{\wt{k}}  \pmod{2^{\wt{k}+1}}.
\end{eqnarray*}
\end{theorem}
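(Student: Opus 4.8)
The plan is to establish the first-digit congruence by $\pi$-adic analysis in $\bb{Z}_q[\zeta_p]$: first for the prime field as a base case, then for $\F[q]{1}$ by reading off the base-$p$ digits of $k$, and finally specializing to $p=2$. The structural facts I rely on are the ramification relation $\pi^{p-1}=-p$ together with $\pi\equiv\zeta_p-1\pmod{(\zeta_p-1)^2}$, and the orthogonality relation $\sum_{x\in\F[q]{1}^*}\omega_q^{a}(x)$, which equals $q-1$ when $(q-1)\mid a$ and $0$ otherwise.

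For $q=p$ I would set $\lambda=\zeta_p-1$ and expand the additive character on integer representatives $x\in\{1,\dots,p-1\}$ as $\efun[\zeta_p]{x}=(1+\lambda)^x=\sum_i\binom{x}{i}\lambda^i$, so that $G_p(j)=\sum_i \frac{\lambda^i}{i!}\sum_x\omega_p^{-j}(x)\,x^{(i)}$ with $x^{(i)}$ the falling factorial. Replacing each integer representative by its Teichm\"uller value costs only a multiple of $p=O(\pi^{p-1})$, so modulo $\pi^{p-1}$ the sum $\sum_x\omega_p^{-j}(x)x^d$ equals $-1$ when $d\equiv j\pmod{p-1}$ and $0$ otherwise. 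The leading index is therefore $i=j$, with coefficient $-\lambda^j/j!$, and converting $\lambda^j$ into $\pi^j$ produces the claimed leading term $-\pi^j/j!$.

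To lift to $\F[q]{1}$ with $q=p^n$, I would use that the Frobenius $x\mapsto x^p$ permutes the summation, fixes $\efun[\zeta_p]{\tr{n}{x}}$, and sends $\omega_q$ to $\omega_q^{\,p}$; consequently $k$ enters only through its digits $k_0,\dots,k_{n-1}$, and a $\pi$-adic factorization of $G(k)$ over these digits (equivalently, the Gross--Koblitz packaging through the $p$-adic Gamma function) assembles $n$ prime-field contributions. The valuations add to $v_\pi(G(k))=k_0+\cdots+k_{n-1}=\wt{k}$, the leading coefficients multiply to $k_0!\cdots k_{n-1}!$ in the denominator, and the sign collapses to the single $-1$ in the statement, giving $G(k)\equiv-\pi^{\wt{k}}/(k_0!\cdots k_{n-1}!)\pmod{\pi^{\wt{k}+p-1}}$.

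Specializing to $p=2$ is then immediate: $\pi=-2$, every digit $k_i\in\{0,1\}$ makes $k_0!\cdots k_{n-1}!=1$, and $\pi^{\wt{k}+p-1}$ generates $(2^{\wt{k}+1})$, so the congruence reduces to $G(k)\equiv-(-2)^{\wt{k}}\equiv 2^{\wt{k}}\pmod{2^{\wt{k}+1}}$ via $-2^{\wt{k}}\equiv 2^{\wt{k}}\pmod{2^{\wt{k}+1}}$. I expect the real obstacle to sit in the base case, in upgrading the easy leading-digit precision $\pi^{j+1}$ to the full $\pi^{j+p-1}$: the falling-factorial corrections at intermediate indices $j<i<j+p-1$ do not vanish individually, and showing that they recombine with the higher-order relation between $\pi$ and $\zeta_p-1$--- equivalently with Dwork's splitting function, which is exactly what pins $\pi$ down through $\pi^{p-1}=-p$--- to leave precisely $-\pi^j/j!$ is the genuinely delicate point; the lifting step then carries the parallel bookkeeping of tracking the exact product of factorials and the surviving sign across the $n$ digits.
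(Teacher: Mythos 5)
The paper itself offers no proof of this theorem: it is quoted as a known result, with the proof delegated to Gross--Koblitz \cite{GK79} and Lang \cite{Lang}. So the only question is whether your sketch would stand as a self-contained proof, and it would not. The decisive gap is the lifting step from $\F[p]{1}$ to $\F[q]{1}$. Frobenius invariance of the summation proves only that $G(k)=G(pk \bmod (q-1))$, i.e.\ that $G(k)$ is unchanged under cyclic rotation of the digit string $(k_0,\dots,k_{n-1})$; it gives no decomposition of $G(k)$ into prime-field pieces. The ``$\pi$-adic factorization of $G(k)$ over the digits'' that you then invoke --- explicitly, the Gross--Koblitz packaging through the $p$-adic Gamma function --- \emph{is} the theorem being proved (indeed a strictly stronger statement), so as written this step is circular; and if citing Gross--Koblitz were permitted, the entire theorem would follow in one line and the rest of your argument would be superfluous. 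A non-circular version of this step is where the real work lies: classically one inducts on the digits of $k$ via the identity $g(\omega^{-a})g(\omega^{-b})=J(\omega^{-a},\omega^{-b})\,g(\omega^{-a-b})$ together with first-digit congruences for the Jacobi sums, which is precisely where the factor $k_0!\,k_1!\cdots k_{n-1}!$ enters; nothing of this kind appears in your sketch. Note also that for $p=2$ --- the only case the paper ever uses --- the base field is $\F[2]{1}$, whose multiplicative group is trivial, so your base case is vacuous ($G(0)=-1$) and \emph{all} of the content sits in this missing step.

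The second gap is the one you concede yourself, and it is real: your expansion controls the terms $i<j$ in $\sum_i (\lambda^i/i!)\sum_x \omega_p^{-j}(x)\,x^{(i)}$ only modulo $p$, i.e.\ to order $\pi^{i+p-1}$, which suffices for the classical modulus $\pi^{\wt{k}+1}$ but not for the modulus $\pi^{\wt{k}+p-1}$ claimed in the statement; likewise $\pi\equiv\zeta_p-1\pmod{(\zeta_p-1)^2}$ yields $\lambda^{j}\equiv\pi^{j}\pmod{\pi^{j+1}}$ but not modulo $\pi^{j+p-1}$. The stated precision is exactly the Gross--Koblitz refinement of Stickelberger's congruence, and your method does not reach it for odd $p$. (For $p=2$ the two moduli coincide and $\lambda=\pi=-2$ exactly, so this particular gap closes, and your endgame $-(-2)^{\wt{k}}\equiv 2^{\wt{k}}\pmod{2^{\wt{k}+1}}$ is correct --- but then the first gap is the whole problem.) In short: the proposal correctly identifies the two delicate points but resolves neither; what it does prove is the weak congruence over the prime field, which is the part that is both classical and, for the paper's purposes ($p=2$), empty.
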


\section{Proof of the main results}\label{sec:main}
The goal of this section is to prove Theorem \ref{thm:Sihem}.  In order to do so, we must first  establish
 some results  concerning exponential sums over finite fields.

\begin{lemma}\label{lem:kloosterman-gauss}
Let $n=2m$,  $q=2^n$ and $a\in \F{m}^*$.  It holds that
\begin{eqnarray*}
\sum_{x\in \F{m}^*} \efun{ \tr{m}{ax +\frac{1}{x}}}= \frac{1}{2^m-1}\sum_{0\le i \le 2^m-2}    \overline{G}(i)^2 \omega_q^{(2^m+1)i} \llp  a \rrp,
\end{eqnarray*}
where $\overline{G}(i)$ denotes the Gauss sum  $\sum_{x \in \F{m}^*}  \omega_q^{-(2^m+1)i} \llp {x} \rrp \efun{\tr{m}{x}}$
over $\F{m}$ with respect to the multiplicative character $ \omega_q^{(2^m+1)i}$.
\end{lemma}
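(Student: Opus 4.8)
The plan is to linearize the two additive characters on the left-hand side by expanding each in the basis of multiplicative characters of $\F{m}^*$ via the interpolation relation of Lemma \ref{lem:add-mult}, and then to collapse the resulting double sum by orthogonality. First I would record the structural fact that makes the statement well-posed: since $2^n-1=(2^m-1)(2^m+1)$, the Teichm\"uller character $\omega_q$ restricts to a character of order $2^m-1$ on $\F{m}^*$, and because $\gcd(2^m-1,2^m+1)=1$ the power $\omega_q^{2^m+1}$ also has order exactly $2^m-1$ on $\F{m}^*$. Hence the restrictions $\omega_q^{(2^m+1)i}\big|_{\F{m}^*}$, for $0\le i\le 2^m-2$, run exactly once through all $2^m-1$ multiplicative characters of $\F{m}^*$.

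Consequently, Lemma \ref{lem:add-mult} applied over $\F{m}$ and re-indexed through these restrictions reads
\begin{eqnarray*}
\efun{\tr{m}{y}} = \frac{1}{2^m-1}\sum_{i=0}^{2^m-2} \ov{G}(i)\,\omega_q^{(2^m+1)i}\llp y \rrp, \qquad y\in\F{m}^*,
\end{eqnarray*}
with $\ov{G}(i)$ the Gauss sum defined in the statement. I would then substitute this expansion into both factors $\efun{\tr{m}{ax}}$ and $\efun{\tr{m}{1/x}}$ of the summand, producing a double sum over indices $i,j$. Using multiplicativity of $\omega_q$ I would write $\omega_q^{(2^m+1)i}(ax)=\omega_q^{(2^m+1)i}\llp a \rrp\,\omega_q^{(2^m+1)i}(x)$ and $\omega_q^{(2^m+1)j}(1/x)=\omega_q^{-(2^m+1)j}(x)$, pulling the $a$-dependent factor outside the sum over $x$.

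The inner sum then reduces to $\sum_{x\in\F{m}^*}\omega_q^{(2^m+1)(i-j)}(x)$, which by orthogonality of characters equals $2^m-1$ when $i=j$ and vanishes otherwise; here I would again use $\gcd(2^m-1,2^m+1)=1$ together with $0\le|i-j|\le 2^m-2$ to guarantee that $i=j$ is the only surviving case. Collapsing the diagonal leaves $\frac{1}{2^m-1}\sum_{i}\ov{G}(i)^2\,\omega_q^{(2^m+1)i}\llp a \rrp$, which is precisely the claimed identity.

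I expect the only delicate point to be the index bookkeeping between the moduli $2^n-1$ and $2^m-1$: one must verify that the restricted characters are enumerated without repetition, and that orthogonality is invoked over $\F{m}^*$ rather than over $\F{n}^*$. Once this is pinned down by the coprimality $\gcd(2^m-1,2^m+1)=1$, the remainder is a routine application of character orthogonality and carries no further obstacle.
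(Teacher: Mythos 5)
Your proposal is correct and takes essentially the same route as the paper's proof: both expand the two additive characters via the interpolation relation of Lemma~\ref{lem:add-mult}, using that $\omega_q^{2^m+1}$ generates the multiplicative character group of $\F{m}^*$, and then collapse the resulting double sum to the diagonal by orthogonality. The bookkeeping point you flag (that the restricted characters $\omega_q^{(2^m+1)i}$ for $0\le i\le 2^m-2$ enumerate the characters of $\F{m}^*$ without repetition) is exactly the observation the paper records at the outset, so there is no divergence.
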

\begin{proof}
Note that $\omega_q^{2^m+1}$ is a generator of the multiplicative character group of $\F{m}^*$.
By Lemma \ref{lem:add-mult},  we have
\begin{eqnarray*}
&&\sum_{x\in \F{m}^*} \efun{ \tr{m}{ax +\frac{1}{x}}}\\
 &&=\frac{1}{(2^m-1)^2}\sum_{0\le i,j \le 2^m-2}    \overline{G}(i) \overline{G}(j) \sum_{x\in \F{m}^*} \omega_q^{(2^m+1)i} \llp  ax \rrp   \omega_q^{-(2^m+1)j} \llp  x \rrp \\
 &&=\frac{1}{(2^m-1)^2}\sum_{0\le i,j \le 2^m-2}    \overline{G}(i) \overline{G}(j)  \omega_q^{(2^m+1)i} \llp  a \rrp    \sum_{x\in \F{m}^*} \omega_q^{(2^m+1)(i-j)} \llp  x \rrp .
\end{eqnarray*}
Since $\sum_{x\in \F{m}^*} \omega_q^{(2^m+1)(i-j)} \llp  x \rrp=2^m-1$ if and only if $i\equiv j \pmod{2^m-1}$, \mqi{and is $0$ otherwise,}  we thus have
\begin{eqnarray*}
\sum_{x\in \F{m}^*} \efun{ \tr{m}{ax +\frac{1}{x}}}=\frac{1}{2^m-1}\sum_{0\le i \le 2^m-2}    \overline{G}(i)^2 \omega_q^{(2^m+1)i} \llp  a \rrp,
\end{eqnarray*}
which completes the proof.
\end{proof}

\begin{lemma}\label{lem:dillon-conguence}
Let $n=2m$ and $q=2^n$.  Write $B=\sum_{x \in \F{n}^*} \efun{\tr{n}{a x^{\mcs } + c x} }$,  where $a, c \in \F{n}^*$.
Then
\begin{enumerate}[label=(\arabic*)]
\item $B= \frac{1}{q-1}\sum_{0\le i \le q-2}    G(i) G\llp -\llp \mcs \rrp i \rrp \omega_q \llp a^i c^{-\llp \mcs \rrp i } \rrp$.
\item $B\equiv K_m \llp a^{2^m+1} \rrp+2^m-1 \pmod{2^{m+1}}$. 
\end{enumerate}
\end{lemma}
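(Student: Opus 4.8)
The plan is to treat the two assertions separately: part~(1) follows from Gauss-sum interpolation and character orthogonality, while part~(2) is a $2$-adic valuation analysis driven by Stickelberger's theorem together with the binary weight inequality of Section~\ref{sec:weight}. For part~(1), since $a,c\in\F{n}^*$ both $ax^{\mcs}$ and $cx$ are nonzero for every $x\in\F{n}^*$, so I would apply Lemma~\ref{lem:add-mult} to each factor, writing $\efun{\tr{n}{ax^{\mcs}}}=\frac{1}{q-1}\sum_i G(i)\,\omega_q^i(a)\,\omega_q^{(\mcs)i}(x)$ and $\efun{\tr{n}{cx}}=\frac{1}{q-1}\sum_j G(j)\,\omega_q^j(c)\,\omega_q^{j}(x)$. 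Multiplying and summing over $x\in\F{n}^*$ leaves the inner sum $\sum_{x}\omega_q^{(\mcs)i+j}(x)$, which by orthogonality equals $q-1$ exactly when $j\equiv-(\mcs)i\pmod{q-1}$ and vanishes otherwise. Collapsing the $j$-sum and using $\omega_q^i(a)\,\omega_q^{-(\mcs)i}(c)=\omega_q(a^i c^{-(\mcs)i})$ yields the stated formula for $B$.

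For part~(2), I would multiply through by the odd (hence $2$-adically invertible) factor $q-1=(\mcs)(2^m+1)$ and analyse $(q-1)B=\sum_{0\le i\le q-2}G(i)\,G(-(\mcs)i)\,\omega_q(a^i c^{-(\mcs)i})$ modulo $2^{m+1}$. By Stickelberger's theorem (Theorem~\ref{thm:stick}) the $2$-adic valuation $v_2$ satisfies $v_2\bigl(G(k)\bigr)=\wt{k}$, and since $-(\mcs)\equiv 2^m(\mcs)\pmod{q-1}$ and multiplication by $2^m$ is a cyclic shift preserving binary weight, $v_2\bigl(G(i)G(-(\mcs)i)\bigr)=\wt{i}+\wt{(2^m-1)i}$. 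The crux is then the binary weight inequality of Section~\ref{sec:weight}: if $i$ is \emph{not} a multiple of $2^m+1$ then $\wt{i}+\wt{(2^m-1)i}\ge m+1$. Granting this, every index that is not a multiple of $2^m+1$ contributes a term divisible by $2^{m+1}$ and hence drops out, leaving only the indices $i=(2^m+1)j$, for which $(\mcs)i\equiv 0$, so that $G(-(\mcs)i)=G(0)=-1$ and $\omega_q(c^{-(\mcs)i})=1$.

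It remains to identify the surviving sum $-\sum_{j}G\bigl((2^m+1)j\bigr)\,\omega_q^{(2^m+1)j}(a)$. Because $\omega_q^{(2^m+1)j}$ has order dividing $\mcs$, it factors through the norm $N\colon\F{n}^*\to\F{m}^*$, $N(x)=x^{2^m+1}$; the Hasse--Davenport lifting relation then gives $G\bigl((2^m+1)j\bigr)=-\ov{G}(j)^2$ up to the index substitution $j\mapsto 2j\bmod(\mcs)$ arising from the restriction of $\omega_q^{2^m+1}$ to $\F{m}^*$. Summing over $j$ and invoking Lemma~\ref{lem:kloosterman-gauss} with parameter $a^{2^m+1}\in\F{m}^*$ (the elementary identity $K_m(u)=K_m(u^2)$, from $x\mapsto x^2$, reconciling the squared characters) collapses the surviving sum to $(\mcs)\bigl(K_m(a^{2^m+1})-1\bigr)$, so that $(q-1)B\equiv(\mcs)\bigl(K_m(a^{2^m+1})-1\bigr)\pmod{2^{m+1}}$.

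Finally I would cancel the $2$-adic unit $\mcs$ to get $(2^m+1)B\equiv K_m(a^{2^m+1})-1\pmod{2^{m+1}}$; since $B$ is a sum of $2^n-1$ terms $\pm1$ it is odd, whence $2^m B\equiv 2^m$, and using $-2^m\equiv 2^m\pmod{2^{m+1}}$ this rearranges to the claimed congruence $B\equiv K_m(a^{2^m+1})+2^m-1\pmod{2^{m+1}}$. I expect the main obstacle to be the binary weight inequality $\wt{i}+\wt{(2^m-1)i}\ge m+1$ for $i$ not divisible by $2^m+1$ (postponed to Section~\ref{sec:weight}), as it is precisely the combinatorial fact forcing all ``off-diagonal'' Gauss-sum products to vanish modulo $2^{m+1}$; a secondary technical point is the careful matching of character indices in the Hasse--Davenport step.
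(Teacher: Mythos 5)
Your part~(1) is exactly the paper's argument, and the overall architecture of your part~(2) --- kill the off-diagonal Gauss-sum products by a $2$-adic valuation bound from Stickelberger, then identify the surviving diagonal $i=(2^m+1)j$ via Davenport--Hasse and Lemma~\ref{lem:kloosterman-gauss} --- is also the paper's. The one substantive problem is where you source the key divisibility fact. You defer the inequality $\wt{i}+\wt{(\mcs)i}\ge m+1$ (for $i$ not a multiple of $2^m+1$) to ``the binary weight inequality of Section~\ref{sec:weight}'', but Theorem~\ref{thm:string} proved there is a different statement: it bounds $\wt{a}+\wt{b}+\wt{s}+\wt{t}$ for $s\equiv u-a+b$ and $t\equiv u+a-b$ with $u\in\{\fcs,\,2\cdot\fcs\}$, and it is used only later, in Lemma~\ref{lem:Walsh-Maim}, to control the products $G(i)G\llp \fcs-\llp\mcs\rrp i\rrp$; it does not yield a lower bound on $\wt{i}+\wt{-\llp\mcs\rrp i}$, since there is no choice of parameters that makes the $u$-shift disappear. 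What the paper actually invokes at this point is the Langevin--Leander computation (\cite[Lemma 2]{LL08}): $\wt{-\llp\mcs\rrp i}=m$ exactly whenever $i$ is not a multiple of $2^m+1$, which combined with $\wt{i}\ge 1$ gives your $\ge m+1$. So the inequality you need is true, but as written your proof has a dangling dependency on a theorem that does not prove it; you should cite the Langevin--Leander result or prove that exact weight evaluation separately. Two smaller points: your handling of Davenport--Hasse via the reindexing $j\mapsto 2j$ (reconciled by $K_m(u)=K_m(u^2)$) is a legitimate alternative to the paper's $\sqrt{x}$-substitution trick, which yields $G((2^m+1)i)=-\ov{G}(i)^2$ with no reindexing; and your final parity step (using that $B$ is odd) is equivalent to the paper's (using that $K_m\llp a^{2^m+1}\rrp$ is even).
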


\begin{proof}
By Lemma \ref{lem:add-mult},  we have
\begin{eqnarray*}
&& \sum_{x \in \F{n}^*} \efun{\tr{n}{a x^{\mcs } + c x} }\\
&&=\frac{1}{(q-1)^2}\sum_{0\le i,j \le q-2}    G(i) G(j) \sum_{x\in \F{n}^*} \mqi{\omega_q^i \llp a x^{\mcs } \rrp}    \mqi{\omega_q^j \llp  c x \rrp} \\
&&=\frac{1}{(q-1)^2}\sum_{0\le i,j \le q-2}    G(i) G(j) \omega_q \llp a^i c^j \rrp \sum_{x\in \F{n}^*} \mqi{\omega_q^{ \llp \mcs \rrp  i + j} \llp  x \rrp } \\
 &&= \frac{1}{(q-1)^2}\sum_{0\le i,j \le q-2}    G(i) G(j) \omega_q \llp a^i c^j \rrp  \omega_q ^{ \llp \mcs \rrp  i + j} \llp  \F{n}^* \rrp ,
\end{eqnarray*}
where $ \omega_q ^{ \llp \mcs \rrp  i + j} \llp  \F{n}^* \rrp=  \sum_{x\in \F{n}^*} \omega_q \llp  x \rrp ^{ \llp \mcs \rrp  i + j} $.
It is well known that
\begin{eqnarray*}
\omega_q ^{ \llp \mcs \rrp  i + j} \llp  \F{n}^* \rrp=\begin{cases}
q-1,   &  if ~~  j \equiv  -\llp \mcs \rrp  i  \pmod{q-1} ,\\
0,    & otherwise.
\end{cases}
\end{eqnarray*}
We thus have
\begin{eqnarray*}
\begin{array}{rl}
& \sum_{x \in \F{n}^*} \efun{\tr{n}{a x^{ \mcs } + c x} } \\
 &=\frac{1}{q-1}\sum_{0\le i \le q-2}    G(i) G\llp -\llp \mcs \rrp i \rrp \omega_q \llp a^i c^{-\llp \mcs \rrp i } \rrp.
\end{array}
\end{eqnarray*}
This completes the proof of the first assertion of the lemma.

To prove part (2),   recall that $\wt{-\llp \mcs \rrp i }=m$ if $i$ is not a multiple of $2^m+1$ (see  \cite[Lemma 2]{LL08}).
Combining Theorem \ref{thm:stick} with the fact $G(0)=-1$ yields
\begin{eqnarray}\label{eqn:dillon-mod}
\begin{array}{rl}
& \sum_{x \in \F{n}^*} \efun{\tr{n}{a x^{ \mcs } + c x} }\\
 &  \equiv  \sum_{0\le i \le 2^m-2}    G((2^m+1)i )  \omega_q \llp a^{(2^m+1)i}  \rrp \pmod{2^{m+1}}.
\end{array}
\end{eqnarray}
Observe that $\omega_q ^{ -(2^m+1)i}\llp x \rrp = \omega_q^{-(2^m+1)i} \llp \sqrt{x}^{2^m+1} \rrp$ for each $x \in \F{n}^*$ and $0\le i \le 2^m-2$.
We deduce that
\begin{eqnarray*}
\begin{array}{rcl}
G((2^m+1)i ) & =& \sum_{x \in \F{n}^*} \omega_q^{-(2^m+1)i}(x) \efun{\tr{n}{x}}\\
&=& \sum_{x \in \F{n}^*}  \omega_q^{-(2^m+1)i} \llp \sqrt{x}^{2^m+1} \rrp \efun{\tr{n}{x}}\\
&=& \sum_{x \in \F{n}^*}  \omega_q^{-(2^m+1)i} \llp \sqrt{x}^{2^m+1} \rrp \efun{\tr{n}{\sqrt{x}}}\\
&=& \sum_{x \in \F{n}^*}  \omega_q^{-(2^m+1)i} \llp {x}^{2^m+1} \rrp \efun{\tr{n}{x}}\\
&=& - \llp  \sum_{x \in \F{m}^*}  \omega_q^{-(2^m+1)i} \llp {x} \rrp \efun{\tr{m}{x}} \rrp ^2,
\end{array}
\end{eqnarray*}
where the last equality follows by the Davenport and Hasse’s lifting theorem (see  Weil \cite[p.  503--505]{Weil49}).
Substituting $G((2^m+1)i ) $ in the expression (\ref{eqn:dillon-mod}) by $-\overline{G}(i)^2$,  we have
\begin{eqnarray*}
\begin{array}{rl}
& \sum_{x \in \F{n}^*} \efun{\tr{n}{a x^{ \mcs } + c x} }\\
 &  \equiv - \sum_{0\le i \le 2^m-2}    \overline{G}(i)^2 \omega_q \llp a^{(2^m+1)i}  \rrp \pmod{2^{m+1}}.
\end{array}
\end{eqnarray*}
Since $\omega_q \llp a^{(2^m+1)i}  \rrp= \omega_q^{(2^m+1)i} \llp \sqrt{a}^{2^m+1}  \rrp $,  we have
\begin{eqnarray*}
\begin{array}{rl}
& \sum_{x \in \F{n}^*} \efun{\tr{n}{a x^{ \mcs } + c x} }\\
 &  \equiv - \sum_{0\le i \le 2^m-2}    \overline{G}(i)^2 \omega_q^{(2^m+1)i} \llp \sqrt{a}^{2^m+1}  \rrp\\
 & \equiv  -(2^m-1)\cdot \sum_{x \in \F{m}^*}  \efun{ \tr{m}{\sqrt{a}^{2^m+1} x +\frac{1}{x}}} \pmod{2^{m+1}},
\end{array}
\end{eqnarray*}
where the last congruence follows from Lemma \ref{lem:kloosterman-gauss}.
Together with the fact $$\sum_{x \in \F{m}^*}  \efun{ \tr{m}{\sqrt{a}^{2^m+1} x +\frac{1}{x}}}
= \sum_{x \in \F{m}^*}  \efun{ \tr{m}{{a}^{2^m+1} x +\frac{1}{x}}},$$  we have
\begin{eqnarray*}
\begin{array}{c}
 \sum_{x \in \F{n}^*} \efun{\tr{n}{a x^{ \mcs } + c x} } \equiv  (2^m+1)\cdot \llp K_m \llp a^{2^m+1} \rrp-1 \rrp \pmod{2^{m+1}},
\end{array}
\end{eqnarray*}
Since $K_m \llp a^{2^m+1} \rrp$ is an even integer,  we have
\begin{eqnarray*}
\begin{array}{c}
 \sum_{x \in \F{n}^*} \efun{\tr{n}{a x^{ \mcs } + c x} } \equiv  K_m \llp a^{2^m+1} \rrp+2^m-1 \pmod{2^{m+1}},
\end{array}
\end{eqnarray*}
which completes the proof of Part (2).
\end{proof}

The most important ingredient in the proof of Theorems \ref{thm:Sihem} is the following result.
\begin{theorem}\label{thm:string}
Let $q=2^{2m}$,  where $m$ is any positive integer,  and let $u$ equal $\frac{2^{2m}-1}{3} \text{ or } 2 \cdot  \frac{2^{2m}-1}{3}$.  For any integers $a$ and $b$,  if $s$ and $t$
satisfy
\[ s\equiv u-a+b,  \quad t\equiv u+a-b \pmod{2^{2m}-1} , \]
then $\wt{a} +\wt{b}+\wt{s}+\wt{t}\ge 2m$.
\end{theorem}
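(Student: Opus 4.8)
The plan is to pass to base $4$ and read off the binary weights one digit at a time, turning the inequality into a statement about closed walks in a finite weighted digraph. Since $n=2m$ and $2^{2m}-1=4^{m}-1$, every residue modulo $N:=2^{2m}-1$ has a unique cyclic base-$4$ expansion with $m$ digits, and the weight $\wt{\cdot}$ is simply the sum over the digits of $v(0)=0$, $v(1)=v(2)=1$, $v(3)=2$. In this expansion one has $\tfrac{2^{2m}-1}{3}=(1,1,\dots,1)_4$ and $2\cdot\tfrac{2^{2m}-1}{3}=(2,2,\dots,2)_4$, both \emph{constant}. Moreover the map $x\mapsto 2x \bmod N$ rotates each binary string by one position, hence preserves every weight, and it sends the data $a,b,s,t,u$ to $2a,2b,2s,2t,2u$ while interchanging the two admissible values of $u$. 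So it suffices to treat the single case $u=\tfrac{2^{2m}-1}{3}$, where $u$ is the constant base-$4$ string.

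First I would compute $t\equiv u+a-b$ and $s\equiv u-a+b \pmod N$ digit by digit, cyclically, introducing carry/borrow variables $c^{t}_{j}$ and $c^{s}_{j}$. A one-line estimate on the digit sums $1+a_j-b_j+c^{t}_j$ and $1-a_j+b_j+c^{s}_j$ shows each carry stays in $\{-1,0,1\}$, so the pair $(c^{s}_{j},c^{t}_{j})$ ranges over a set of at most nine \emph{states}, and this set is closed under the transition. Because $u$ is constant, the transition rule is the same at every digit: reading an input $(a_j,b_j)\in\{0,1,2,3\}^2$ from a given state determines the output digits $s_j,t_j$ and the next state. I then form the directed graph $D$ whose vertices are the states and whose edges are these transitions, assigning to each edge the weight
\[
w=v(a_j)+v(b_j)+v(s_j)+v(t_j)-2 .
\]
A quadruple $(a,b,s,t)$ satisfying the hypotheses corresponds exactly to a closed walk of length $m$ in $D$ (the carries must match up cyclically), and the total edge weight of that walk is precisely $\wt{a}+\wt{b}+\wt{s}+\wt{t}-2m$. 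Thus the theorem is equivalent to the single assertion that $D$ has no directed cycle of negative total weight; note that this one statement then yields the bound for \emph{all} $m$ simultaneously, matching the theorem's generality.

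The heart of the argument is therefore to certify that $D$ admits no negative cycle, which I would do by exhibiting a feasible potential $\phi$ on the (at most) nine states satisfying $w(e)+\phi(\mathrm{tail}\,e)-\phi(\mathrm{head}\,e)\ge 0$ for every edge $e$; any such $\phi$ forces every cycle weight to be nonnegative. Constructing $\phi$ is the main obstacle, and it is delicate precisely because $D$ already carries zero-weight cycles: for example $a=1,\,b=0$ gives $s=(0,1,\dots,1)_4$, $t=(2,1,\dots,1)_4$, so $\wt{a}+\wt{b}+\wt{s}+\wt{t}=2m$, a closed walk of total weight $0$. Hence $\phi$ must be exactly tight along such walks, and the verification reduces to a finite but careful inspection of the $9\times 16$ edges together with their induced carry transitions. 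The remaining routine work is to pin down the exact carry ranges and output rules so that this finite check is rigorous, and to record that, by the reduction of the first paragraph, the very same digraph disposes of $u=2\cdot\tfrac{2^{2m}-1}{3}$, which completes the proof.
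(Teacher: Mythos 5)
Your reduction is essentially the one the paper uses, just repackaged: the paper runs the add-and-carry recursion in base $2$ over a period of $2m$ (so $u$ is the alternating string $0101\cdots$, which is why its graph has vertices $(u',a',b',c',d')$ and $72$ of them), while you run it in base $4$ over a period of $m$ (so $u$ is constant and only the two carries, at most $9$ states, need to sit on the vertices). Both versions correctly identify $\wt{a}+\wt{b}+\wt{s}+\wt{t}-2m$ with the total weight of a closed walk and reduce the theorem to the nonexistence of a negative-weight directed cycle in a fixed finite digraph independent of $m$; your doubling-map reduction to the single case $u=\frac{2^{2m}-1}{3}$ is also correct, and your tight example $a=1$, $b=0$ is a genuine zero-weight cycle, confirming that the bound cannot be improved.

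The gap is that the decisive step is never carried out. The entire content of the theorem, once the reduction is in place, is the finite verification that the digraph has no negative cycle, and you explicitly defer it: no potential $\phi$ is exhibited, no enumeration of the (at most) $9\times 16$ transitions is performed, and you yourself call constructing $\phi$ ``the main obstacle.'' The reduction itself is the routine part (it is the Hollmann--Xiang machinery the paper also invokes); without the certificate the inequality is not proved. The paper closes this step by a computer search (MAGMA) over the unique nontrivial strongly connected component of its $72$-vertex graph. Your base-$4$ formulation is small enough that a hand-checkable potential on nine states is plausibly within reach and would actually improve on the paper's computer-assisted step --- but you must either write down such a $\phi$ and verify feasibility edge by edge, or perform the equivalent exhaustive check. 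One further detail to nail down when you do: the residue $0 \bmod (2^{2m}-1)$ has two cyclic digit representations, and $\wt{\cdot}$ is defined to be $0$ there, so you must fix the all-zero representation for any of $a,b,s,t$ that vanish so that the walk weight really equals $\wt{a}+\wt{b}+\wt{s}+\wt{t}-2m$.
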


The proof of Theorem \ref{thm:string} involves graph-theory methods. To streamline the presentation of the paper,
we \mqi{postpone} the proof to the next section.

The following lemma establishes \mqi{a congruence relation between} the Walsh coefficients of $\fb$ and \mqi{the} Kloosterman sum,  which plays an important role
in the proof of Theorem \ref{thm:Sihem}.
\begin{lemma}\label{lem:Walsh-Maim}
Let $a \in \F{n}^*$,  where $n=2m$ and $m$ is an even positive integer.   Let $\fb$ be the binomial function given in (\ref{eqn:sihem-functions}).
Then for any $c\in \F{n}^*$ we have
\[\wsh{\fb}{c} \equiv \frac{4-K_{m} \llp a^{2^m+1} \rrp}{3} +2^m   \pmod{2^{m+1}}.\]
\end{lemma}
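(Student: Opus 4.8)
The plan is to isolate the contribution of the cubic term $\tr{2}{x^{\fcs}}$ by a multiplicative–character decomposition, and then to read off everything modulo $2^{m+1}$ from Gauss sums, exactly as in Lemma~\ref{lem:dillon-conguence}, with Theorem~\ref{thm:string} used only to control $2$-adic valuations. First I would record that, since $m$ is even, $3\mid 2^m-1$, and the map $x\mapsto x^{\fcs}$ sends $\F{n}^*$ onto the cube roots of unity in $\F[4]{1}$; a direct check gives $\tr{2}{x^{\fcs}}=0$ when $x$ is a cube and $1$ otherwise. Writing $u_j=j\frac{q-1}{3}$ for $j=1,2$ and letting $\omega_q^{(q-1)/3}$ be the cubic Teichm\"uller character, this is the pointwise identity
\[ \efun{\tr{2}{x^{\fcs}}}=-\tfrac13+\tfrac23\llp \omega_q^{(q-1)/3}(x)+\omega_q^{2(q-1)/3}(x)\rrp,\qquad x\in\F{n}^*. \]
Separating the term $x=0$ (which equals $1$) and substituting gives
\[ \wsh{\fb}{c}=1-\tfrac13 S+\tfrac23\llp U_1+U_2\rrp, \]
where $S=\sum_{x\in\F{n}^*}\efun{\tr{n}{ax^{\mcs}+cx}}$ is precisely the sum $B$ of Lemma~\ref{lem:dillon-conguence} and $U_j=\sum_{x\in\F{n}^*}\omega_q^{u_j}(x)\efun{\tr{n}{ax^{\mcs}+cx}}$. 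Throughout, $3$ is a $2$-adic unit, so each $\tfrac13$ is read as $3^{-1}$ modulo the relevant power of $2$, and since $U_2=\overline{U_1}$ the quantity $U_1+U_2$ is a rational integer.

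Next I would expand each $U_j$ by Lemma~\ref{lem:add-mult} applied to both $\efun{\tr{n}{ax^{\mcs}}}$ and $\efun{\tr{n}{cx}}$, collapsing the inner $x$-sum by orthogonality just as in the proof of Lemma~\ref{lem:dillon-conguence}(1). This yields
\[ U_j=\frac{1}{q-1}\sum_{0\le i\le q-2} G(i)\,G(k_i)\,\omega_q\llp a^{i}c^{k_i}\rrp,\qquad k_i\equiv-\llp \mcs\rrp i-u_j \pmod{q-1}. \]
By Stickelberger (Theorem~\ref{thm:stick}, case $p=2$) one has $v_2\bigl(G(i)G(k_i)\bigr)=\wt{i}+\wt{k_i}$, while $\tfrac{1}{q-1}$ and the Teichm\"uller factor $\omega_q(a^{i}c^{k_i})$ are $2$-adic units; hence every summand of $U_j$ has $2$-adic valuation $\wt{i}+\wt{k_i}$.

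The key step is the valuation bound $\wt{i}+\wt{k_i}\ge m$ for \emph{every} $i$, which I would deduce from Theorem~\ref{thm:string}. In the notation of that theorem, take the parameters to be $(a,b)=(i,\,2^m i)$ and $u=-u_j\ (\equiv u_{3-j})$. Using $2^{2m}\equiv1$ and $2^m u_j\equiv u_j\pmod{q-1}$, a short computation shows the associated exponents are $s\equiv 2^m k_i$ and $t\equiv k_i$, so by the shift-invariance $\wt{2^m x}=\wt{x}$ of the weight,
\[ 2\llp \wt{i}+\wt{k_i}\rrp=\wt{i}+\wt{2^m i}+\wt{2^m k_i}+\wt{k_i}=\wt{a}+\wt{b}+\wt{s}+\wt{t}\ge 2m. \]
Thus $v_2(U_j)\ge m$, and since $U_1+U_2\in\mathbb{Z}$ this gives $U_1+U_2\equiv 0\pmod{2^m}$. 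Consequently $\tfrac23(U_1+U_2)$ has valuation $\ge m+1$ and vanishes modulo $2^{m+1}$, so $\wsh{\fb}{c}\equiv 1-\tfrac13 S\pmod{2^{m+1}}$. Substituting $S\equiv K_m\llp a^{2^m+1}\rrp+2^m-1\pmod{2^{m+1}}$ from Lemma~\ref{lem:dillon-conguence}(2) and simplifying (the discrepancy $2^{m+2}/3$ also having valuation $m+1$) yields the claimed congruence.

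The hard part is Theorem~\ref{thm:string} itself, which is deferred to Section~\ref{sec:weight} and supplies the entire weight inequality; within the present lemma the only delicate points are the exact specialisation that turns $\wt{i}+\wt{k_i}$ into the four-term sum of Theorem~\ref{thm:string}, and the careful $2$-adic reading of the coefficients $\pm\tfrac13,\tfrac23$. In particular one must verify that the factor $2$ in $\tfrac23$ is what promotes the valuation of $U_1+U_2$ from $m$ to $m+1$; it is precisely this promotion that lets the whole $c$-dependent block $\tfrac23(U_1+U_2)$ drop out modulo $2^{m+1}$, leaving a value of $\wsh{\fb}{c}$ that depends only on the Kloosterman sum $K_m\llp a^{2^m+1}\rrp$ and not on $c$, as bentness (via Lemma~\ref{lem:Bent-Mod}) requires.
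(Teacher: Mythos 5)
Your proposal is correct and follows essentially the same route as the paper: the decomposition $\wsh{\fb}{c}=1-\tfrac13 S+\tfrac23(U_1+U_2)$ is exactly the paper's equation obtained there via the substitution $x\mapsto x^3$ rather than via the cubic-character identity, and the key steps (expanding via Lemma \ref{lem:add-mult}, applying Theorem \ref{thm:string} with $(a,b)=(i,2^m i)$ and $u\equiv\pm\fcs$ to get $\wt{i}+\wt{k_i}\ge m$, then Stickelberger and Lemma \ref{lem:dillon-conguence}(2)) coincide with the paper's. The only cosmetic slips are that the final discrepancy $2^{m+2}/3$ has valuation $m+2$ rather than $m+1$ (which only helps), and that the extra factor of $2$ you highlight comes from the coefficient $\tfrac23$ exactly as in the paper's treatment of the two shifted Gauss-sum blocks.
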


\begin{proof}

Let $C_0=\llb x^3: x\in \F{n}^* \rrb$.  By noting that
$$
\tr{2}{ x^{\fcs}}=\begin{cases}
0,   &  if ~~ x\in C_0 \cup \llb 0 \rrb\\
1,  & otherwise
\end{cases},
$$
we have
\begin{eqnarray}\label{eqn:trinomial}
\begin{array}{rl}
&\wsh{\fb}{c}\\
&=\sum_{x \in \F{n}} \efun{\tr{n}{a x^{\mcs} + c x} +\tr{2}{ x^{\fcs}}}\\
&=1- \sum_{x \in \F{n}^*} \efun{\tr{n}{a x^{\mcs} + c x} } +2 \cdot \sum_{x \in C_0} \efun{\tr{n}{a x^{\mcs} + c x} }\\
&=1- \sum_{x \in \F{n}^*} \efun{\tr{n}{a x^{\mcs} + c x} } +\frac{2}{3} \cdot \sum_{x \in \F{n}^*} \efun{\tr{n}{a x^{3\llp \mcs \rrp} + c x^3} }.
\end{array}
\end{eqnarray}
By Lemma \ref{lem:add-mult},  we have
\begin{eqnarray*}
&& \sum_{x \in \F{n}^*} \efun{\tr{n}{a x^{3\llp \mcs \rrp} + c x^3} }\\
 &&=\frac{1}{(q-1)^2}\sum_{0\le i,j \le q-2}    G(i) G(j) \sum_{x\in \F{n}^*} \omega_q \llp a x^{3\llp \mcs \rrp} \rrp ^i   \omega_q \llp  c x^3 \rrp ^j\\
 &&=\frac{1}{(q-1)^2}\sum_{0\le i,j \le q-2}    G(i) G(j) \omega_q \llp a^i c^j \rrp \sum_{x\in \F{n}^*} \omega_q \llp  x \rrp ^{3\cdot \llp \mcs \rrp i + 3j}.
\end{eqnarray*}
Note that 
\begin{align*}
  \sum_{x\in \F{n}^*} \omega_q \llp  x \rrp ^{3\cdot \llp \mcs \rrp i + 3j}=\begin{cases}
   q-1, & \mbox{if $3\cdot \llp \mcs \rrp i + 3j \equiv 0 \pmod{q-1}$,}\\
  0, & \mbox{otherwise.}
  \end{cases}
\end{align*}
Since $3\cdot \llp \mcs \rrp i + 3j \equiv 0 \pmod{q-1}$
if and only if $j= \frac{2^n-1}{3} \ell -(2^m-1)i$,  where $0 \le \ell \le 2$,  we obtain
\begin{eqnarray}\label{eqn:b33}
\begin{array}{rl}
& \sum_{x \in \F{n}^*} \efun{\tr{n}{a x^{3\llp \mcs \rrp} + c x^3} }\\
 &=\frac{1}{q-1}\sum_{0\le i \le q-2}    G(i) G\llp \fcs-\llp \mcs \rrp i \rrp \omega_q \llp a^i c^{\fcs-\llp \mcs \rrp i } \rrp \\
 &\quad    + \frac{1}{q-1}\sum_{0\le i \le q-2}    G(i) G\llp 2 \cdot \fcs-\llp \mcs \rrp i \rrp \omega_q \llp a^i c^{2\cdot  \fcs-\llp \mcs \rrp i } \rrp \\
 &\quad    + \frac{1}{q-1}\sum_{0\le i \le q-2}    G(i) G\llp -\llp \mcs \rrp i \rrp \omega_q \llp a^i c^{-\llp \mcs \rrp i } \rrp .
\end{array}
\end{eqnarray}
Combining (\ref{eqn:trinomial}) \mqi{with} (\ref{eqn:b33}) yields
\begin{eqnarray}\label{eqn:wsh-3part}
\begin{array}{rl}
&\wsh{\fb}{c}\\
&=1- \frac{1}{3(q-1)} \sum_{0\le i \le q-2}    G(i) G\llp -\llp \mcs \rrp i \rrp \omega_q \llp a^i c^{-\llp \mcs \rrp i } \rrp \\
& \quad + \frac{2}{3(q-1)}\sum_{0\le i \le q-2}    G(i) G\llp \fcs-\llp \mcs \rrp i \rrp \omega_q \llp a^i c^{\fcs-\llp \mcs \rrp i } \rrp \\
 &\quad    + \frac{2}{3(q-1)}\sum_{0\le i \le q-2}    G(i) G\llp 2 \cdot \fcs-\llp \mcs \rrp i \rrp \omega_q \llp a^i c^{2\cdot  \fcs-\llp \mcs \rrp i } \rrp .
\end{array}
\end{eqnarray}
When $m$ is even,  it is routine to check that
\[ \fcs-\llp \mcs \rrp i  \equiv  2^m\llp \fcs+\llp \mcs \rrp i \rrp  \pmod{2^n-1}\]
and
\[ 2\cdot \fcs-\llp \mcs \rrp i  \equiv  2^m\llp 2\cdot \fcs+\llp \mcs \rrp i \rrp  \pmod{2^n-1}.\]
\mqi{Applying} Theorem \ref{thm:string} with $u=\fcs$,  $a=i$ and $b=2^m \cdot i$\mqi{,
we then} have
\[ \wt{i}+\wt{\fcs-\llp \mcs \rrp i } \ge m. \]
Similarly,  we have
\[ \wt{i}+\wt{2\cdot \fcs-\llp \mcs \rrp i } \ge m. \]
By Theorem \ref{thm:stick},  we have
\[G(i) G\llp \fcs-\llp \mcs \rrp i \rrp  \equiv 0 \pmod{2^m}\]
and
\[G(i) G\llp 2\cdot \fcs-\llp \mcs \rrp i \rrp  \equiv 0 \pmod{2^m}.\]
Continuing from (\ref{eqn:wsh-3part}),  we have
\begin{eqnarray*}
\begin{array}{c}
\wsh{\fb}{c} \equiv 1+ \frac{1}{3} \sum_{0\le i \le q-2}    G(i) G\llp -\llp \mcs \rrp i \rrp \omega_q \llp a^i c^{-\llp \mcs \rrp i } \rrp \pmod{2^{m+1}}.
\end{array}
\end{eqnarray*}
By Lemma \ref{lem:dillon-conguence},  we get
\begin{eqnarray*}
\begin{array}{c}
\wsh{\fb}{c} \equiv 1+ \frac{1-K_m\llp a^{2^m+1} \rrp +2^m }{3}  \pmod{2^{m+1}},
\end{array}
\end{eqnarray*}
\mqi{which further gives}
\begin{eqnarray*}
\begin{array}{c}
\wsh{\fb}{c} \equiv \frac{4-K_m\llp a^{2^m+1} \rrp }{3}  +2^m  \pmod{2^{m+1}}.
\end{array}
\end{eqnarray*}
This completes the proof of the lemma.
\end{proof}

Having dealt with these preliminaries, we can now prove the \mqi{main} theorem.


\begin{proof}[Proof of Theorem \ref{thm:Sihem}]
Substituting $K_m\llp a^{2^m+1} \rrp$ in the expression for $\wsh{\fb}{c}$ in Lemma \ref{lem:Walsh-Maim} by $4$,  we have
\[\wsh{\fb}{c} \equiv 2^m \pmod{2^{m+1}}.\]
Combining the above congruence with Lemma \ref{lem:Bent-Mod},  the proof of Theorem \ref{thm:Sihem}  is concluded,
by noting that $\f$ is bent if and only if $\fb$ is bent \cite{FM}.
\end{proof}

\begin{example}
Let $\F{6}$ be the finite field represented as  $\F{1}[z]/(z^6 + z^4 + z^3 + z + 1)$
and $a=z^3$. Then $K_6(a)=4$ and the binomial function $\fb (x)=\tr{12}{a x^{2^6-1}}
+\tr{2}{x^{\frac{2^{12}-1}{3}}}$ over $\F{12}$ is bent. Its dual $\tilde{f}_{a,1}(x)$
is given by $\tr{12}{z^{48} x^{357}}+
\tr{12}{z^{28} x^{147}}+
\tr{12}{z^{3} x^{63}}+
\tr{12}{z^{62} x^{21}}+
\tr{12}{z^{60} x^{105}}+
\tr{4}{ x^{273}}+ \tr{2}{x^{1365}}. $
However, the Boolean function $\fb(x^{11})$
is not bent as  its Walsh spectrum $\left\{ \wsh{\fb}{c}: c\in \F{12} \right\}$ is
$ \left\{  0, \pm 32, \pm 64, \pm 96, \pm 128, \pm 160  \right\}.$
Note that $\mathrm{g.c.d}(11,2^{12}-1)=1$. Hence $\fb$ is not a hyper-bent funcition.
\end{example}


\section{Proof of the weight inequality}\label{sec:weight}
Let $\llb s_j \rrb_{j\in \mathbb Z}$  be a sequence.
If for a positive integer $n$ the terms of $\llb s_j \rrb_{j\in \mathbb Z}$ satisfy $s_j = s_{j+n}$ for all $j\in \mathbb Z$,
 then we say that $\llb s_j \rrb_{j\in \mathbb Z}$ is $n$-periodic.  If $s_j \in \{0, 1\}$ for all $j\in \mathbb Z$,  $\llb s_j \rrb_{j\in \mathbb Z}$ is said to be a binary
 sequence.   Let $0^{ \mathbb Z}$ (resp.,  $1^{ \mathbb Z}$ ) denote the sequence $\llb s_j \rrb_{j\in \mathbb Z}$ with
 $s_j =0$ (resp., $s_j=1$) for all $j$.
 Any sequence of length $n$ can be extended to a periodic sequence with period $n$.

Given $a$ and $b$,  we use a modular add-and-carry method inspired
by \cite{H-Xiang} to help compute the weights of $u-a+b$ and $u+a-b$ that appear in the inequality
in Theorem \ref{thm:string}.  The basic result we need is a technical result related to
\cite[Theorem 13]{H-Xiang}.
\begin{theorem}\label{thm:xiang}
Let $\llb s_j \rrb_{j\in \mathbb Z}$ and $\llb a^{(i)}_j \rrb_{j\in \mathbb Z}$,  $1 \le i \le r$,  be
binary sequences of period $n$ with $\llb a^{(i)}_j \rrb_{j\in \mathbb Z} \not \in \llb  0^{ \mathbb Z},  1^{ \mathbb Z} \rrb$ for some $i$.   Let $t_1, \dots,  t_r$ be nonzero integers.
Suppose that
\[s \equiv t_1 a^{(1)}+ \cdots +t_{r} a^{(r)} \pmod{2^n-1},\]
where $s=\sum_{i=0}^{n-1} s_j 2^j$ and $a^{(i)}=\sum_{j=0}^{n-1} a^{(i)}_j 2^j$.
Then there exists a unique $n$-periodic
sequence $\llb c_i \rrb_{i\in Z}$ with terms in $\llb t_{-},  t_{-}+1,  \dots, t_{+}-1 \rrb$ such that
\[ 2c_j+ s_j  = t_1 a^{(1)}_j+ \cdots +t_{r} a^{(r)}_j +c_{j-1} \quad (j \in \mathbb Z),   \]
where $t_{-}=\sum_{i , t_i<0} t_i$ and $t_{+}=\sum_{i , t_i>0} t_i$.  Moreover,  we have that
\[\sum_{j=0}^{n-1} c_j = \sum_{i=1}^{r} t_i \sum_{j=0}^{n-1} a^{(i)}_j -\sum_{j=0}^{n-1} s_j.\]
\end{theorem}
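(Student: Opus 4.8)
The plan is to read the displayed relation $2c_j+s_j=t_1a^{(1)}_j+\cdots+t_ra^{(r)}_j+c_{j-1}$ as the carry rule for base-$2$ addition with end-around carry, obtaining the summation identity and uniqueness almost for free, constructing an integer periodic solution directly from the congruence hypothesis, and spending the real effort on the range bound $c_j\in\{t_-,\dots,t_+-1\}$. Throughout I write $T_j=\sum_{i=1}^{r}t_i a^{(i)}_j$, so that $t_-\le T_j\le t_+$, and $M=\sum_{i=1}^{r}t_i a^{(i)}$. The summation identity is immediate: summing $2c_j+s_j=T_j+c_{j-1}$ over $j=0,\dots,n-1$ and using $\sum_{j}c_{j-1}=\sum_{j}c_j$ (by $n$-periodicity) makes the $2\sum c_j$ and $\sum c_j$ terms cancel, leaving $\sum_{j=0}^{n-1}c_j=\sum_{i=1}^{r}t_i\sum_{j=0}^{n-1}a^{(i)}_j-\sum_{j=0}^{n-1}s_j$, which is the ``moreover'' clause and holds for any periodic solution.

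For uniqueness I would subtract two periodic solutions $\{c_j\}$ and $\{c'_j\}$ of the same recurrence (with the same $s_j$ and $a^{(i)}_j$): the difference $d_j=c_j-c'_j$ satisfies $d_{j-1}=2d_j$, whence $d_0=2^{n}d_0$ by periodicity and therefore $d_j\equiv 0$. Thus at most one periodic carry sequence exists, and the range restriction is not even needed for this half.

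For existence I would build an integer periodic solution straight from the congruence rather than from a blind add-with-carry. Since $s\equiv M\pmod{2^n-1}$, the integer $C=(M-s)/(2^n-1)$ is well defined; I set $c_{n-1}=C$ and propagate by the \emph{backward} form of the recurrence, $c_{j-1}=2c_j+s_j-T_j$, which preserves integrality at every step. Unrolling $n$ steps gives $c_{-1}=2^{n}C+\sum_{j=0}^{n-1}2^{j}(s_j-T_j)=2^{n}C+(s-M)=2^{n}C-(2^n-1)C=C=c_{n-1}$, so the sequence closes up and extends to an $n$-periodic integer solution realizing the prescribed digits $s_j$.

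The hard part will be the range bound $c_j\in\{t_-,\dots,t_+-1\}$, and this is exactly where the hypothesis that some $a^{(i)}$ is neither $0^{\mathbb{Z}}$ nor $1^{\mathbb{Z}}$ must enter. Let $c_{\max}=\max_j c_j$, attained at $j^{*}$; from $c_{j^{*}-1}=2c_{\max}+s_{j^{*}}-T_{j^{*}}\ge 2c_{\max}-t_+$ together with $c_{j^{*}-1}\le c_{\max}$ I get $c_{\max}\le t_+$. If equality $c_{\max}=t_+$ held, the same inequality would force $c_{j^{*}-1}=t_+$, $s_{j^{*}}=0$ and $T_{j^{*}}=t_+$; iterating this backward shows $c_j\equiv t_+$ and $T_j\equiv t_+$ for all $j$, and $T_j\equiv t_+$ forces $a^{(i)}_j=1$ for every $i$ with $t_i>0$ and $a^{(i)}_j=0$ for every $i$ with $t_i<0$, i.e. every $a^{(i)}\in\{0^{\mathbb{Z}},1^{\mathbb{Z}}\}$, contradicting the hypothesis. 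Hence $c_{\max}\le t_+-1$, and the symmetric argument applied to $c_{\min}=\min_j c_j$ (whose boundary case $c_{\min}=t_--1$ likewise forces all inputs constant) gives $c_{\min}\ge t_-$. This places $\{c_j\}$ in $\{t_-,\dots,t_+-1\}$ and completes the proof; the only delicate point is this boundary analysis, where the non-triviality of the inputs is precisely what prevents the carries from sticking at an extreme value.
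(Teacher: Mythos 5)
Your proof is correct. Note that the paper itself does not prove this statement: it is quoted as ``a technical result related to'' Theorem 13 of Hollmann and Xiang \cite{H-Xiang}, and the reader is sent to that reference. So there is no in-paper argument to compare against, but your write-up is a valid self-contained proof, and it is organized somewhat differently from the usual add-and-carry treatment. The standard route (as in Hollmann--Xiang) defines the carries by running the long-addition algorithm forward and then argues that the end-around carry stabilizes; you instead compute the wrap-around carry explicitly as $C=(M-s)/(2^n-1)$, which is an integer precisely because of the hypothesis $s\equiv M\pmod{2^n-1}$, seed the recursion with $c_{n-1}=C$, and propagate backward, so that closure of the cycle ($c_{-1}=2^nC+(s-M)=C$) is a one-line telescoping identity rather than a stabilization argument. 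Your uniqueness step is also slightly stronger than needed, since $d_{j-1}=2d_j$ plus periodicity kills the difference of any two integer periodic solutions without invoking the range restriction. The genuinely delicate part, the bound $t_-\le c_j\le t_+-1$, is handled correctly: the extremal-index argument gives $c_{\max}\le t_+$ and $c_{\min}\ge t_--1$ outright, and the boundary cases $c_{\max}=t_+$ and $c_{\min}=t_--1$ propagate backward around the whole period to force every $a^{(i)}$ to be $0^{\mathbb Z}$ or $1^{\mathbb Z}$, which is exactly what the nondegeneracy hypothesis excludes. I see no gap.
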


The numbers $s_j$ and $c_j$ will be referred to as the digits and carries,  respectively,
for the computation modulo $2^n- 1$ of the number $s$.We now give the promised proof of Theorem \ref{thm:string}.

\begin{proof}[Proof of Theorem \ref{thm:string}]

Let $s$ and $t$ be defined as in the
theorem, and write $n=2m$.  Assume that $a$,  $b$,  $u$,  $s$ and $t$
have \mqi{the} binary digits $a_j$,  $b_j$,  $u_j$,  $s_j$ and $t_j$,  for $j=0, \dots,  n-1$.
Note that
\begin{eqnarray*}
s& \equiv &u-a  + b \pmod{2^n-1},\\
t& \equiv &u+a-b  \pmod{2^n-1}.
\end{eqnarray*}
Now apply
Theorem \ref{thm:xiang} to the defining additions for $s$ and $t$.  In both
cases,  $t_{+}= 1$ and $t_{-}=-1$,  hence there are carry sequences
$\llb c_j \rrb_{j \in \mathbb Z}$ and $\llb d_j \rrb_{j \in \mathbb Z}$
with $-1 \le c_j,  d_j \le 1$ such that
\begin{eqnarray}\label{eq:sj-tj}
\begin{array}{rcl}
2 c_j + s_j & = &u_j-a_j + b_j+c_{j-1},\\
2 d_j + t_j & = &u_j+a_j-b_j+d_{j-1}.
\end{array}
\end{eqnarray}
Moreover,  $\sum_{j=0}^{n-1} c_j + \sum_{j=0}^{n-1} d_j +  \sum_{j=0}^{n-1} s_j  + \sum_{j=0}^{n-1} t_j =n$.
Using this relation, we see that the weight inequality in
the theorem is equivalent to
\begin{eqnarray}\label{eq:walk-weight}
\sum_{j=0}^{n-1} \left(  a_j +b_j - c_j -  d_j \right) \ge 0 .
\end{eqnarray}

In order to analyze the contribution of the individual binary
digits $a_j, b_j,  c_j,  d_j$ to the sum $\sum_{j=0}^{n-1} \left(  a_j +b_j - c_j -  d_j \right)$ ,
we construct the following weighted directed graph  $G$.

The graph $G$ will have a vertex $(u',  a', b',  c',  d')$ whenever
$u',  a',  b' \in \left\lbrace 0, 1 \right\rbrace$ and $c',  d' \in \left\lbrace -1,  0, 1 \right\rbrace$,
and a weighted directed arc
\[(u',  a', b',  c',  d') \xrightarrow{a'+b'-c''- d''}  (u'',  a'', b'',  c'',  d''),\]
whenever $u''=1-u'$,
\[s'= u'-a'+b'+c'-2c'' \in  \left\lbrace 0, 1 \right\rbrace, \]
and
\[t'= u'+a'-b'+d'-2d'' \in  \left\lbrace 0, 1 \right\rbrace.\]
Note that,  according to these definitions,  whenever (\ref{eq:sj-tj}) holds
there is an arc
\[(u_{j},  a_j, b_j,  c_{j-1},  d_{j-1}) \xrightarrow{a_{j}+b_{j}-c_{j}- d_{j}}  (u_{j+1},  a_{j+1}, b_{j+1},  c_{j},  d_{j}),\]
in the graph.  Therefore,  there is a one-to-one correspondence between $n$-periodic sequences $\llb u_j \rrb_{j \in \mathbb Z}$,  where $u_j+u_{j-1}=1$
,  $\llb a_j \rrb_{j \in \mathbb Z}$,  $\llb b_j \rrb_{j \in \mathbb Z}$,  $\llb c_j \rrb_{j \in \mathbb Z}$,  $\llb d_j \rrb_{j \in \mathbb Z}$
\mqi{satisfy
relation} (\ref{eq:sj-tj}) with the corresponding sum of weights
$w= \sum_{j=0}^{n-1} \left(  a_j +b_j - c_j -  d_j \right)$ and the directed walks of
length $n$ in the graph for which the sum of the weights of the
arcs equals $w$.  Thus to verify (\ref{eq:walk-weight}),  it suffices to show that
the graph does not contain any walk of strictly
negative weight.

We investigated the  weighted directed graph $G$ with the aid of a computer.
It turns out that $G$,  a digraph on $72$ vertices,  has $33$ strongly
connected components.  Here,  two vertices of a directed  graph are said to be strongly
connected if they are contained together in a directed cycle. The relation of
being strongly connected is an equivalence relation on the set of vertices; the
equivalence classes of this relation are called the the strongly connected
components of the directed graph.

One of these strongly connected components has size $40$ (that is,  contains
$40$ vertices),  denoted by $H$,  and further $32$ strongly connected components have size $1$.
Obviously,  all strongly connected components of size $1$ contain no arc at all.
 The weighted directed graph $H$ has $8$ arcs of weight $-1$,  $32$ arcs of weight $0$,
$80$ arcs of weight $1$,  $32$ arcs of weight $2$,  and further $8$ arcs of weight $3$.

So we are done if we can show that the weight of each directed
walk in the directed graph $H$ is nonnegative.
Using standard graph-theory tools of MAGMA \cite{magma} it is easy to verify that the component $H$ has no negative-weight walk.
Note that the running time is
negligible.  This completes the proof.
\end{proof}

\section{Concluding remarks}\label{sec:conc}

Combining our results with  the work of Mesnager in \cite{Mes09,MesDCC},
the bentness  of binomial function $\f(x)= \tr{n}{a x^{\mcs}} +\tr{2}{b x^{\fcs}}$,
where $m$ is an positive integer,  $a\in \F{n}^*$ and $b\in \F[4]{1}^*$,  has been completely characterized:
$\f$ is bent if and only if the Kloosterman sum $K_m(a^{2^m+1})=1+\sum_{x \in \F{n}^*} \efun{\tr{m}{a^{2^m+1}x+\frac{1}{x}}}$
equals $4$.
A related question is whether the bent function arising from Theorem \ref{thm:Sihem} is hyper-bent.  We have checked \mqi{by computer} that the answer is no in all cases when $m=4,  6,  8$ and $10$,  and we believe that the answer is no in general.





\end{document}